\newcommand{\changed}[1]{\textcolor{black}{#1}}
\newcommand{\student}[1]{}
\renewcommand{\vec}[1]{\overrightarrow{#1}}  
\title{Using Ray-shooting Queries for Sublinear Algo-\\rithms for Dominating Sets in RDV Graphs%
\thanks{Research of TB supported by NSERC, RGPIN-2020-03958.
Research of PG supported by a MITACS Globalink Graduate Fellowship. The results also appeared as part of the second author's Master's thesis \cite{PrashantThesis} \changed{at the University of Waterloo}.} 
} 
\titlerunning{Dominating Sets in RDV graphs}
\author{Therese Biedl\inst{1}
\orcidID{0000-0002-9003-3783}
\and
Prashant Gokhale\inst{2}
}
\institute{David R.~Cheriton School of Computer
Science, University of Waterloo, \\ Waterloo, Ontario, Canada,
\email{biedl@uwaterloo.ca}
\and
University of Wisconsin-Madison, Madison, Wisconsin, USA\\
\email{prashant.gokhale@wisc.edu}
}
\begin{document}

\maketitle
\begin{abstract}
In this paper, we study the dominating set problem in \emph{RDV graphs}, 
\changed{a graph class that lies between interval graphs and chordal graphs
and defined as the}
\textbf{v}ertex-intersection graphs of \textbf{d}ownward paths in a \textbf{r}ooted tree.   
It was shown in a previous paper that adjacency queries in an RDV graph can be reduced to the
question whether a horizontal segment intersects a vertical segment.   This was then used
to find a maximum matching in an \changed{$n$-vertex} RDV graph, 
\changed{using priority search trees, in $O(n\log n)$ time, i.e., without even looking at all edges.}
In this paper, we show that if additionally we also use a ray shooting data
structure, we can also find a minimum dominating set in an RDV graph 
$O(n\log n)$ time (presuming a linear-sized representation of the graph is given).
\changed{The same idea can also be used for a new proof to find a 
minimum dominating set in an interval graph in $O(n)$ time.} \end{abstract}

\section{Introduction}

In a graph \( G \), a subset $S$ of  vertices is called \emph{dominating} if every vertex in the graph is either included in $S$ or adjacent to a vertex in $S$. 
Finding a minimum dominating set is one of the oldest problems in graph theory and graph algorithms, with applications
in routing problems in wireless networks and facility location problems. Finding \changed{a} minimum dominating set is \(\mathsf{NP}\)-hard in general~\cite{GJ79}, as there exists a straightforward reduction from the set cover problem. On the other hand, 
\changed{an easy \(O(\log{n})\)-approximation algorithm}   
is well known \cite{Chvatal1979}. 

In this paper, we study the problem of finding a minimum dominating set in 
graph classes 
\changed{whose}   
structure makes it feasible to find a minimum dominating set very efficiently.   
The minimum dominating set problem remains $\mathsf{NP}$-\changed{h}ard for chordal graphs \cite{BoothJ82}  
but polynomial time algorithms exist for permutation graphs \cite{FarberK85}, cographs \cite{CorneilP84}, and strongly chordal graphs
\cite{Farber84}. 
(Section~\ref{sec:preliminaries} will review those definitions that are needed later again.)
For more information about the complexity of minimum dominating set in various subclasses of perfect graphs, 
the interested reader can look at an overview paper by Corneil and Stewart \cite{CorneilS90}. 

This paper is concerned with finding \emph{sub-linear} time algorithms for minimum dominating set, i.e., we are not even allowed to look at all edges.   The main ingredient here is that the graph is given implicitly, typically via an intersection representation.   For example in an \emph{interval graph} (i.e., the intersection graph of $n$ 1-dimensional intervals), a minimum dominating set can be found in $O(n)$ time if a suitable interval representation is given, even though the graph may have $\Theta(n^2)$ edges.%
\footnote{This result is attributed to \cite{BertossiG88,Bra87,RamalingamR88} in \cite{ChengKZ98}, but neither of these three papers claims an $O(n)$-time algorithm for dominating set in interval graphs.   The closest is \cite{BertossiG88}, which states a run-time of $O(n\log n)$ for the minimum-weight total dominating set problem.   Nevertheless, it is easy to develop an $O(n)$-algorithm; see also Section~\ref{sec:interval}.} 
Other problems can also be solved in sub-linear time in interval graphs, see e.g.~\cite{CPL99} for path cover, Hamiltonian cycle and domatic partition, or \cite{Liang1993} for maximum matching.
Cheng, Kaminski, and Zaks \cite{ChengKZ98} extended the algorithm for dominating set in interval graphs to a graph class slightly bigger than interval graphs, specifically, intersection graphs of downward paths in a \emph{spider} (a tree consisting of multiple paths sharing a common node).

In this paper, we study a graph class called \emph{RDV graphs}, which \changed{strictly} contains the graph class of Cheng, Kaminski and Zaks \changed{(and hence the interval graphs)}. RDV graphs are the intersection graphs of downward paths in an arbitrary rooted tree.    These are known to be strongly chordal graphs, and as such we can find a minimum dominating set in them in linear time \cite{Farber84}.   In fact, this was known even earlier:   Booth and Johnson showed that a straightforward linear-time greedy algorithm \changed{performs optimally if applied to an RDV graph} \cite{BoothJ82}.    

The goal of this paper is to modify the greedy-algorithm by Booth and Johnson to achieve run-time $O(|T|+n\log n)$, where $|T|$ is the size of the host-tree in a given intersection representation of the RDV graph. (We may assume that $|T|\in O(n)$ \changed{\cite{BG-CGT25}}, and will therefore typically state the bound as $O(n\log n)$ time.)   While this is not quite as fast as the algorithm for interval graphs, it is still sublinear since RDV graphs may have quadratically many edges.
The key ingredient is the insight (first published in a companion paper \cite{BG-CGT25}) that
adjacency queries in an RDV graph can be reduced to an intersection problem between horizontal and vertical segments.    In \cite{BG-CGT25},
this insight was combined with range-minimum queries (implemented with a priority search tree) to obtain sublinear algorithms for
maximum matching in RDV graphs.    In this paper, we show that if we \emph{additionally} also use a ray-shooting data structure, 
then more complicated neighbourhood queries can also be solved efficiently without looking at all edges, and use this
to solve the minimum dominating set in sublinear time.    (The technique should work more broadly; we briefly show that we can also
find a minimum dominating set in an interval graph with it.)

The paper is structured as follows.   \changed{In Section~\ref{sec:background},} after reviewing some definitions, we briefly explain results
from the companion paper that we need.   \changed{In Section~\ref{sec:DS}} we give the greedy-algorithm and explain which of its operations cannot easily be sped up with
the known techniques.   Next we explain how to use a 
ray-shooting data structure to implement this operation.  This gives us the faster run-time for the gree\changed{d}y-algorithm for dominating set.
\changed{We discuss an $O(n)$-time algorithm for dominating set in interval graphs in Section~\ref{sec:IS} and
end in Section~\ref{sec:further} with further thoughts.}

\section{Background}
\label{sec:background}
\label{sec:preliminaries}

We assume familiarity with graphs, and refer e.g.~to Diestel's book for basic terminology \cite{Die12}.   For a vertex $z$
in a graph, we write $N[z]$ for the \emph{closed neighbourhood}, i.e., the vertex $z$ and all its neighbours.
Often we will fix a vertex order $z_1,\dots,z_n$.    We then write $N_{\geq}[z_i]$ for those vertices $z_q$ in the 
closed neighbourhood of $z_i$ that satisfy $q\geq i$ (and similarly define $N_<[z_i],N_{\leq}[z_i],N_>[z_i]$).

In this paper we only study a graph $G$
that has a \emph{representation as a vertex-intersection graph of subtrees of a tree} (or \emph{tree-intersection representation} for short). 
This consists of a \emph{host-tree} $T$ and, for each vertex $z$ of $G$, a subtree $T(z)$ of $T$ such that $(z,w)$ is an edge of $G$ if and only 
if $T(z)$ and $T(w)$ share at least one node of $T$.
As convention, we use the term `node' for the vertices of $T$, to distinguish them from the vertices of $G$.
It is well-known \cite{Gavril74}
that a graph has a tree-intersection representation
if and only if it is \emph{chordal}, i.e., it has no induced cycle of length 4 or more.

Numerous subclasses of chordal graphs can be defined by restricting the host-tree and/or the subtrees $T(z)$ in some way.   For example,
we mentioned the (unnamed) class of graphs studied by
Cheng, Kaminski, and Zaks \cite{ChengKZ98} where the host-tree is restricted to be a spider, and the subtrees are restricted to be 
paths that do not contain the center of the spider as interior vertex.   The graph class we study here was named by Monma and Wei \cite{MonmaW86},
though studied even earlier by Gavril \cite{Gavril75}:

\begin{definition} A \emph{rooted directed path graph} (or \emph{RDV graph} \cite{MonmaW86})
is a graph that has an \emph{RDV representation}, i.e., a tree-intersection representation with a rooted host-tree $T$ where for every vertex $z$ the subtree $T(z)$ is a \emph{downward path}, i.e., a path that begins at some node and then always goes downwards in $T$.
\end{definition}

We may assume that $T$ has $O(n)$ nodes \cite{BG-CGT25}.
We assume throughout the paper that an RDV graph $G$ is given implicitly via an RDV representation $(T,t,b)$, i.e., we are given a rooted tree $T$, and for every vertex $z$ of $G$, a \emph{top node} $t(z)$ and a \emph{bottom node} $b(z)$ that is a descendant of $t(z)$, with the convention that $T(z)$ is the path between $t(z)$ and $b(z)$.   
A \emph{bottom-up enumeration}  of $G$ is a vertex order obtained by sorting vertices by decreasing distance of $t(z)$ to the root, breaking ties arbitrarily.
\changed{This can be computed in $O(|T|+n)$ time using bucket-sort.}

\usetikzlibrary{backgrounds}
\begin{figure}[ht]
\centering
\begin{tikzpicture}[
        host/.style = {draw,rectangle,rounded corners,fill=white,inner sep=1pt},
        graph/.style = {draw,circle,fill=white},
        z1/.style = {draw=green!50!black,ultra thick,densely dotted},
        z2/.style = {draw=red,thick,dotted},
        z3/.style = {draw=gray,ultra thick,dash dot dot},
        z4/.style = {draw=orange,thick,dashed},
        z5/.style = {draw=blue!50!red,ultra thick,loosely dotted,cap=round},
        z6/.style = {draw=blue,thick,dash dot},
        z7/.style = {draw=green!50!blue,ultra thick,dash pattern={on 4pt off 1pt on 4pt off 1pt on 1pt off 1pt on 1pt off 1pt}},
        level distance = 1cm,
        sibling distance = 1cm,
]
\begin{scope}[scale=0.45]
\node [draw=none] at (-2,8) (G) {\bf $G$:};
    \node[graph,z3] (z3) at (0, 6) {\( z_3 \)};
    \node[graph,z1] (z1) at (2, 8) {\( z_1 \)};
    \node[graph,z2] (z2) at (0, 8) {\( z_2 \)};
    \node[graph,z4] (z4) at (2, 6) {\( z_4 \)};
    \node[graph,z5] (z5) at (0, 4) {\( z_5 \)};
    \node[graph,z6] (z6) at (2, 4) {\( z_6 \)};
    \node[graph,z7] (z7) at (0, 2) {\( z_7 \)};
    \draw (z3) -- (z5);
    \draw (z5) -- (z6);
    \draw (z6) -- (z7);
    \draw (z6) -- (z4);
    \draw (z4) -- (z1);
    \draw (z5) -- (z4);
    \draw (z2) -- (z4);
    \draw (z2) -- (z1);
\end{scope}
\begin{scope}[scale=1,yshift=5cm,xshift=5cm]
\node [draw=none] at (-1,-0.5) (T) {\bf $T$:};
\draw [->,-latex] (0,-0.5) -- (5.5,-0.5) node [right] {$x$};
\draw [->,-latex] (0,-0.5) -- (0,-4.5) node [left] {$y$};
  \foreach \x/\xlabel in {1/1, 2/2, 3/3, 4/4, 5/5}
    \draw (\x,-0.6) -- (\x, -0.4) node[anchor=south,fill=white] {\small \xlabel};
  \foreach \y/\ylabel in {0/0, -1/1, -2/2, -3/3}
    \draw (0.1,-1+\y) -- (-0.1 ,-1+\y) node[anchor=east, fill=white] {\small \ylabel};

\node [host] (r) at (1,-1) {$\overline{z_6}$, $\overline{z_7}$}
        child {edge from parent[draw=none]}
        child {edge from parent[draw=none]}
        child {edge from parent[draw=none]}
        child {edge from parent[draw=none]}
        child { node [host] (c2) {$\overline{z_4},\overline{z_5},{z_6}$}
                child {edge from parent[draw=none]}
                child {edge from parent[draw=none]}
                child {edge from parent[draw=none]}
                child { node [host] (g3) {$\overline{z_1},\underline{\overline{z_2}},z_4$}
                        child {edge from parent[draw=none]}
                        child { node [host] (b3) {$\underline{z_1}$}}
                        child { node [host] (l4) {$\underline{z_4}$}}
                }
                child {edge from parent[draw=none]}
                child { node [host] (g2) {$\overline{z_3},z_5$}
                        child { node [host] (l2) {$\underline{z_3},\underline{z_5}$}}
                }
                child { node [host] (l5) {$\underline{z_6}$} }
        }
        child {edge from parent[draw=none]}
        child {edge from parent[draw=none]}
        child {edge from parent[draw=none]}
        child { node [host] (c1) {$z_7$}
                        child { node [host] (l1) {$\underline{z_7}$}}
        }
;
\end{scope}

\begin{scope}[on background layer]
\draw [z3] ([xshift=-5pt]g2.center)-- ([xshift=-5pt]l2.center); 
\draw [z1] ([xshift=-5pt]g3.center)-- ([xshift=-5pt]b3.center); 
\draw [z2] ([xshift=2pt]g3.north east) -- ([xshift=2pt]g3.south east); 
\draw [z4] ([xshift=5pt]c2.center) -- ([xshift=5pt]g3.center)-- ([xshift=5pt]l4.center); 
\draw [z5] ([xshift=5pt]c2.center) -- ([xshift=5pt]g2.center) -- ([xshift=5pt]l2.center); 
\draw [z6] ([xshift=5pt]r.center) -- ([xshift=5pt]c2.center) -- ([xshift=5pt]l5.center); 
\draw [z7] ([xshift=5pt]r.center) -- ([xshift=5pt]c1.center) -- ([xshift=5pt]l1.center); 
\end{scope}
\end{tikzpicture}
\caption{
An RDV graph with an RDV representation. 
Each node $\nu$ lists those vertices $z$ with $\nu\in T(z)$; we write $\underline{z}$ if $\nu=b(z)$ and $\overline{z}$ if $\nu=t(z)$.   
We also show the paths as poly-lines, with colors/dash-pattern matching the vertices.
Nodes are drawn at their coordinates, and indices correspond to a bottom-up enumeration order. 
}
\label{fig:RDV_example}
\end{figure}
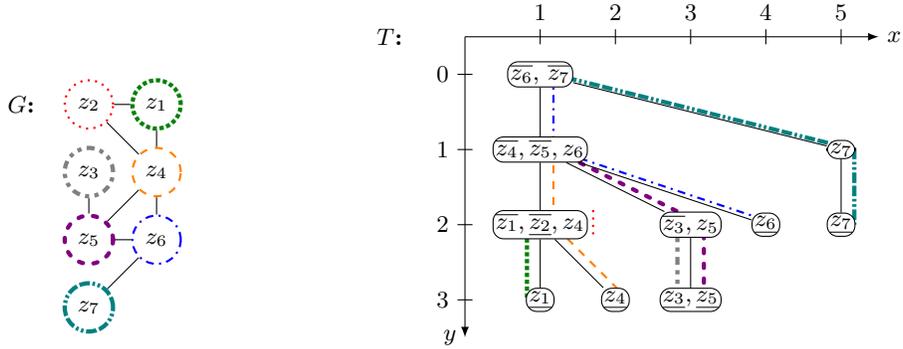

\subsection{From RDV representation to segments}

We briefly review some crucial insights taken from our companion paper \cite{BG-CGT25}.
Let $G$ be a graph with an RDV representation with host tree $T$.
Assign points to the nodes of the host-tree $T$ as follows:
For each node $\nu$, define $y$-coordinate $y(\nu)$ to be the distance of node $\nu$ from the root of $T$.
To define the $x$-coordinate, first
fix an arbitrary order of children at each node, then enumerate the leaves of $T$ as $L_1,\dots,L_\ell$ from left to right,
and then for each node $\nu$ set $x(\nu)$ to be the index of the leftmost leaf that is a descendant of $\nu$.   
Figure~\ref{fig:RDV_example} 
shows each node $\nu$ drawn at point $(x(\nu),y(\nu))$ (where $y$-coordinates increase top-to-bottom).    
We can compute these coordinates in $O(|T|)$ time.

\begin{definition}
Let $G$ be a graph with an RDV representation. For each vertex~$z$, define the following (see also Figure~\ref{fig:RDV_horsegment}): 

\begin{itemize}
\item The \emph{horizontal segment} $\mathbf{h}(z)$ of $z$ is the segment that
extends rightward from the point of $t(z)$ until 
\changed{the $x$-coordinate of} 
the rightmost descendant of $t(z)$. 
\item The \emph{vertical segment} 
$\mathbf{v}(z)$ of $z$ is the vertical segment that extends upward 
from the point of $b(z)$ until 
\changed{the $y$-coordinate of} 
of $t(z)$ (i.e., \changed{it} hits $\mathbf{h}(z)$). 
\end{itemize}
\end{definition}

Crucial to developing efficient algorithms is that edges in the graph can be characterized via intersections of these segments.

\begin{theorem}\cite{BG-CGT25}
\label{thm:RDVgeometric}
Let $G$ be a graph with an RDV representation and let $z_1,\dots,z_n$ be a bottom-up enumeration of vertices.     Then for any $i<j$,
edge $(z_i,z_j)$ exists if and only if the vertical segment $\mathbf{v}(z_j)$ intersects the horizontal segment $\mathbf{h}(z_i)$.
\end{theorem}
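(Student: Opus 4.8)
The plan is to first restate the edge condition purely combinatorially, and then translate that statement into the geometric one. By definition of a tree-intersection representation, the edge $(z_i,z_j)$ exists iff the downward paths $T(z_i)$ and $T(z_j)$ share a node of $T$. I will repeatedly use two features of the setup: a downward path $T(z)$ consists exactly of the ancestors of $b(z)$ whose depth lies between $y(t(z))$ and $y(b(z))$ (so $t(z)$ is an ancestor-or-equal of $b(z)$); and the bottom-up enumeration guarantees $y(t(z_i))\ge y(t(z_j))$ for $i<j$, i.e.\ the top $t(z_i)$ is at least as deep as $t(z_j)$.

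First I would establish the combinatorial equivalence: for $i<j$, the paths $T(z_i)$ and $T(z_j)$ intersect iff $t(z_i)$ is an ancestor-or-equal of $b(z_j)$. For the forward direction, any shared node $\nu$ is a common descendant of $t(z_i)$ and $t(z_j)$, so both tops are ancestors of $\nu$ and hence comparable; since $t(z_j)$ is no deeper than $t(z_i)$, node $t(z_j)$ is an ancestor-or-equal of $t(z_i)$, and since $\nu$ lies on $T(z_j)$ it is an ancestor-or-equal of $b(z_j)$, so $t(z_i)$, being an ancestor-or-equal of $\nu$, is an ancestor-or-equal of $b(z_j)$. Conversely, if $t(z_i)$ is an ancestor-or-equal of $b(z_j)$, then $t(z_i)$ and $t(z_j)$ are both ancestors of $b(z_j)$, hence comparable, and the depth inequality $y(t(z_j))\le y(t(z_i))$ places $t(z_i)$ on the subpath from $t(z_j)$ to $b(z_j)$, i.e.\ on $T(z_j)$; as $t(z_i)$ trivially lies on $T(z_i)$, the two paths share it.

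It then remains to translate ``$t(z_i)$ is an ancestor-or-equal of $b(z_j)$'' into ``$\mathbf{v}(z_j)$ meets $\mathbf{h}(z_i)$''. Here I would invoke two properties of the coordinate assignment: $y(\nu)$ equals depth, so ancestry forces $\le$ in $y$; and, because children are ordered and leaves are numbered left to right, the leaf descendants of any node $\nu$ occupy a contiguous index interval $[x(\nu),r(\nu)]$, where $r(\nu)$ denotes the $x$-coordinate of the rightmost descendant of $\nu$ (the right endpoint of $\mathbf{h}(\nu)$). Now $\mathbf{h}(z_i)$ lives at height $y(t(z_i))$ over $x\in[x(t(z_i)),r(t(z_i))]$, while $\mathbf{v}(z_j)$ lives at abscissa $x(b(z_j))$ over $y\in[y(t(z_j)),y(b(z_j))]$, so they cross iff $y(t(z_i))\le y(b(z_j))$ (the bound $y(t(z_j))\le y(t(z_i))$ being automatic from the enumeration) together with $x(t(z_i))\le x(b(z_j))\le r(t(z_i))$. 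If $t(z_i)$ is an ancestor-or-equal of $b(z_j)$, then the depth property yields the $y$-inequality and nesting of leaf intervals yields the $x$-inequalities, so the segments cross.

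The step I expect to be the crux is the reverse implication of this last translation, where I must recover the ancestry relation from the two coordinate inequalities alone. The key observation is that $x(b(z_j))$ is itself the index of a leaf descendant of $b(z_j)$; the inequality $x(t(z_i))\le x(b(z_j))\le r(t(z_i))$ then places that same leaf inside the leaf interval of $t(z_i)$, so $t(z_i)$ and $b(z_j)$ share a leaf descendant and are therefore comparable, and the inequality $y(t(z_i))\le y(b(z_j))$ fixes $t(z_i)$ as the ancestor. Chaining the combinatorial equivalence with this geometric equivalence proves the theorem.
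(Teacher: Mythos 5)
Your proof is correct. Note that the paper itself gives no proof of this theorem---it is imported from the companion paper \cite{BG-CGT25}---so there is nothing in this document to compare your argument against; on its own merits, however, your two-stage reduction is sound. The combinatorial characterization (for $i<j$, the paths $T(z_i)$ and $T(z_j)$ meet iff $t(z_i)$ is an ancestor-or-equal of $b(z_j)$) is exactly right, and both directions of your argument for it correctly exploit that the bottom-up order forces $y(t(z_j))\le y(t(z_i))$ and that two ancestors of a common node are comparable. The geometric translation is also handled carefully: you correctly identify the only delicate point, namely recovering ancestry from the coordinate inequalities, and your resolution---that $x(b(z_j))$ is the index of an actual leaf below $b(z_j)$, that the leaves below $t(z_i)$ occupy precisely the contiguous index interval $[x(t(z_i)),r(t(z_i))]$, and hence that the two nodes share a leaf descendant and are comparable, with the $y$-inequality deciding which is the ancestor---is exactly the observation that makes the equivalence an ``if and only if'' rather than just an implication.
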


\subsection{Pre-processing the RDV representation}

For our data structures to come, it will be helpful if each set of segments defined by vertices are disjoint.   This could be achieved with perturbing the coordinates,  but we find it easier to modify instead the RDV representation to have some properties that may be of independent interest.  The following is easy to show (see Figure~\ref{fig:RDV_horsegment} for an example).

\begin{restatable}{lemma}{NiceRDV}
\label{lem:nice}
Let $G$ be a graph with an RDV representation $(T,t,b)$. Then, in $O(|T|+n)$ time, we can conver\changed{t} $(T,t,b)$ into an RDV representation $(T',t',b')$ such that
$t'(z)\neq t'(w)$ and $b'(z)\neq b'(w)$ for all \changed{pairs of vertices} $z\neq w$.   Furthermore, all bottom nodes are leaves and no top node is a leaf.   
\end{restatable}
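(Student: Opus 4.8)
The plan is to fix the two defects---repeated top/bottom nodes, and top/bottom nodes sitting at the wrong kind of tree node---in two independent phases, first enforcing the bottom-node conditions and then the top-node conditions, arguing at each step that the vertex-intersection graph is unchanged so that $(T',t',b')$ still represents $G$.

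First I would make all bottom nodes distinct leaves. For each vertex $z$ I attach a brand-new leaf $\ell_z$ as a child of $b(z)$ and extend the downward path $T(z)$ by the single edge down to $\ell_z$, resetting $b'(z):=\ell_z$. Since $\ell_z$ lies in no other path, no intersection is created or destroyed, so $G$ is preserved, and the $\ell_z$ are pairwise distinct leaves by construction. A useful side effect is that this phase also repairs the ``non-leaf'' condition for tops for free: after the extension, every top node $t(z)$ either has a downward continuation along $T(z)$ (if $b(z)\neq t(z)$) or, in the degenerate single-node case $t(z)=b(z)$, has just acquired the child $\ell_z$; in either case no top node is a leaf. (Any tree node lying in no path at all can be pruned in $O(|T|)$ time beforehand.)

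Next I would separate coincident top nodes by repeatedly subdividing an edge and lifting one path. Fix a node $\nu$ with parent $p$ at which $k\ge 2$ vertices $z_1,\dots,z_k$ have their top. I subdivide the edge $(p,\nu)$ by a new node $\mu$, declaring $\mu\in T'(w)$ exactly for the paths $T(w)$ that used that edge, i.e.\ those with $\{p,\nu\}\subseteq T(w)$; this is a homeomorphism of the host tree and hence preserves all intersections. I then extend the single path $T(z_1)$ upward so that $t'(z_1):=\mu$. The crux is to check that this lift creates no spurious adjacency: the only vertices now meeting $z_1$ at $\mu$ are those $w$ with $\nu\in T(w)$, and since $z_1$ also contains $\nu$, every such $w$ was already adjacent to $z_1$; conversely $z_1$ still contains $\nu$, so its old adjacencies survive. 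Thus $z_1$ now has a top node distinct from $z_2,\dots,z_k$, and iterating this peeling $k-1$ times at $\nu$ stacks a chain of new nodes above $\nu$ and gives all of $z_1,\dots,z_k$ pairwise-distinct, non-leaf tops. The new top nodes each have a child, and the bottom leaves from the first phase are untouched, so both the ``non-leaf tops'' and ``leaf bottoms'' properties are maintained.

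For the running time, the first phase adds exactly $n$ leaves in $O(n)$ time. For the second phase I would bucket the vertices by their top node in $O(|T|+n)$ time; the total number of subdivision nodes inserted is at most $\sum_\nu k_\nu=n$, so all chains are built in $O(n)$ time, and recomputing the point coordinates of the enlarged tree costs $O(|T|+n)$. The main obstacle is not the bookkeeping but the correctness of the top-separation step---the verification that subdividing plus lifting exactly one path neither merges nor splits any closed neighbourhood---and the two observations that make it go through are that a subdivision alone is intersection-preserving and that any path sharing the new node $\mu$ with the lifted path necessarily already shared $\nu$ with it.
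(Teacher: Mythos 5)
Your proposal is correct and follows essentially the same route as the paper's proof: attach a fresh leaf below each bottom node (which simultaneously guarantees that no top node is a leaf), and separate coincident top nodes by subdividing the edge to the parent and re-anchoring the tops on the new subdivision nodes, with the same key observation that any path through a subdivision node already contained $\nu$. The only detail you leave implicit is the case where $\nu$ is the root and has no parent edge to subdivide; as in the paper, one simply inserts a temporary new parent above the root.
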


\begin{proof}
We assume that each node $\nu$ of $T$ knows all vertices for which it is the top node, and all vertices for which it is the bottom node.   (We can compute this in $O(|T|+n)$ time.)   Now perform the following at each node $\nu$ of $T$.   First, if there are $d$ vertices $z_1,\dots,z_d$ for which $\nu$ is the bottom node, then add $d$ new leaves as children of $\nu$, and make these the new bottom nodes for $z_1,\dots,z_d$.   (With this, no top node is a leaf anymore, since the corresponding bottom node became a strict descendant.)    Second, if there are $k\geq 2$ vertices $w_1,\dots,w_k$ for which $\nu$ is the top node, then let $\rho$ be the parent of $\nu$ (temporarily insert a new node as parent $\rho$ if $\nu$ was the root).   Subdivide the link from $\nu$ to $\rho$ with $k-1$ new nodes, and make these the new top nodes of $w_2,\dots,w_k$.   One easily verifies that the newly defined paths intersect in a node if and only if the old paths did.   We added at most two new nodes for each vertex of $G$, and so $|T'|\leq |T|+O(n)$.
\end{proof}

\usetikzlibrary{backgrounds}
\begin{figure}[ht]
\centering
\begin{tikzpicture}[
        host/.style = {draw,rectangle,rounded corners,fill=white,inner sep=1pt},
        graph/.style = {draw,circle,fill=white},
        v1/.style = {draw=brown,ultra thick},
        z3/.style = {draw=gray,ultra thick,dash dot dot},
        z1/.style = {draw=green!50!black,ultra thick,densely dotted},
        z2/.style = {draw=red,thick,dotted},
        z4/.style = {draw=orange,thick,dashed},
        z5/.style = {draw=blue!50!red,ultra thick,loosely dotted,cap=round},
        z6/.style = {draw=blue,thick,dash dot},
        z7/.style = {draw=green!50!blue,ultra thick,dash pattern={on 4pt off 1pt on 4pt off 1pt on 1pt off 1pt on 1pt off 1pt}},
        v9/.style = {draw=red!20!blue!30!white,line width=3pt,dash pattern={on 1pt off 1pt on 1pt off 1pt}},
        level distance = 1cm,
        sibling distance = 1cm,
]

\begin{scope}[scale=1,yshift=0cm,xshift=-5cm]
\useasboundingbox (0,0) rectangle (8,-7.5);
\draw [->,-latex] (0,-0.5) -- (7.5,-0.5) node [right] {$x$};
\draw [->,-latex] (0,-0.5) -- (0,-7.5) node [left] {$y$};
  \foreach \x/\xlabel in {1/1, 2/2, 3/3, 4/4, 5/5, 6/6, 7/7}
    \draw (\x,-0.6) -- (\x, -0.4) node[anchor=south,fill=white] {\small \xlabel};
  \foreach \y/\ylabel in {0/0, -1/1, -2/2, -3/3, -4/4, -5/5, -6/6}
    \draw (0.1,-1+\y) -- (-0.1 ,-1+\y) node[anchor=east, fill=white] {\small \ylabel};

\node [host] (t7) at (1,-1) {${\overline{z_7}}$}
    child {edge from parent[draw=none]}
    child {edge from parent[draw=none]}
    child {edge from parent[draw=none]}
    child {edge from parent[draw=none]}
    child {edge from parent[draw=none]}
    child {edge from parent[draw=none]}
    child { node [host] (t6) {$\overline{z_6}$, ${z_7}$}
	child {edge from parent[draw=none]}
	child {edge from parent[draw=none]}
	child {edge from parent[draw=none]}
	child {edge from parent[draw=none]}
	child {edge from parent[draw=none]}
	child {edge from parent[draw=none]}
        child { node [host] (t5) {$\overline{z_5},{z_6}$}
            child {edge from parent[draw=none]}
            child {edge from parent[draw=none]}
            child {edge from parent[draw=none]}
            child {edge from parent[draw=none]}
            child {edge from parent[draw=none]}
            child { node [host] (t4) {$\overline{z_4},{z_5},{z_6}$}
                child {edge from parent[draw=none]}
                child {edge from parent[draw=none]}
                child {edge from parent[draw=none]}
                child {edge from parent[draw=none]}
                child {edge from parent[draw=none]}
                child { node [host] (t2) {${\overline{z_2}},z_4$}
                    child {edge from parent[draw=none]}
                    child {edge from parent[draw=none]}
                    child { node [host] (t1) {$\overline{z_1},{{z_2}},z_4$}
                        child {edge from parent[draw=none]}
                        child {edge from parent[draw=none]}
                        child { node [host] (b3) {$\underline{z_1}$}}
                        child { node [host] (b4) {$\underline{z_2}$}}
                        child { node [host] (b5) {$\underline{z_4}$}}
                    }
                    child {edge from parent[draw=none]}
                    child { node [draw=none] (t1next) {} edge from parent[draw=none]}
                }
                child {edge from parent[draw=none]}
                child { node [draw=none] (t2next) {} edge from parent[draw=none]}
                child { node [host] (t3) {$\overline{z_3},z_5$}
                        child { node [host] (l2) {${z_3},{z_5}$}
                		child {edge from parent[draw=none]}
                        	child { node [host] (b2) {$\underline{{z_3}}$}}
                        	child { node [host] (b6) {$\underline{z_5}$}}
			}
                }
                child { node [draw=none] (t3next) {} edge from parent[draw=none]}
                child { node [host] (g7) {${z_6}$} 
                	child { node [host] (b7) {$\underline{z_6}$} }
		}
            }
            child {edge from parent[draw=none]}
            child {edge from parent[draw=none]}
            child {edge from parent[draw=none]}
            child {edge from parent[draw=none]}
            child { node [draw=none] (t4next) {} edge from parent[draw=none]}
        }
	child {edge from parent[draw=none]}
	child {edge from parent[draw=none]}
	child {edge from parent[draw=none]}
	child {edge from parent[draw=none]}
        child { node [draw=none] (t5next) {} edge from parent[draw=none]}
        child { node [host] (c1) {$z_7$}
                        child { node [host] (b8) {$\underline{z_7}$}}
	}
    }
    child {edge from parent[draw=none]}
    child {edge from parent[draw=none]}
    child {edge from parent[draw=none]}
    child {edge from parent[draw=none]}
    child {edge from parent[draw=none]}
    child { node [draw=none] (t6next) {} edge from parent[draw=none]}
;
\node [draw=none] (t7next) at (7,-1) {};
\end{scope}

\begin{scope}[on background layer] 
\draw [z1] ([yshift=-1mm]t1.south west) -- ([xshift=1mm,yshift=-1.5mm]t1next.south east); \draw [z1] ([xshift=1mm,yshift=-1.5mm]t1.south) -- ([xshift=1mm]b3.north); 
\draw [z2] ([yshift=-1mm]t2.south west) -- ([xshift=1mm,yshift=-1.5mm]t2next.south east); \draw [z2] ([xshift=11mm,yshift=-1.5mm]t2.south) -- ([xshift=1mm]b4.north); 
\draw [z3] ([yshift=-1mm]t3.south west) -- ([xshift=1mm,yshift=-1.5mm]t3next.south east); \draw [z3] ([xshift=1mm,yshift=-1.5mm]t3.south) -- ([xshift=1mm]b2.north); 
\draw [z4] ([yshift=-1mm]t4.south west) -- ([xshift=1mm,yshift=-1.5mm]t4next.south east); \draw [z4] ([xshift=21mm,yshift=-1.5mm]t4.south) -- ([xshift=1mm]b5.north); 
\draw [z5] ([yshift=-1mm]t5.south west) -- ([xshift=1mm,yshift=-1.5mm]t5next.south east); \draw [z5] ([xshift=41mm,yshift=-1.5mm]t5.south) -- ([xshift=1mm]b6.north); 
\draw [z6] ([yshift=-1mm]t6.south west) -- ([xshift=1mm,yshift=-1.5mm]t6next.south east); \draw [z6] ([xshift=51mm,yshift=-1.5mm]t6.south) -- ([xshift=1mm]b7.north); 
\draw [z7] ([yshift=-1mm]t7.south west) -- ([xshift=1mm,yshift=-1.5mm]t7next.south east); \draw [z7] ([xshift=1mm,yshift=-1.5mm]t7next.south) -- ([xshift=1mm]b8.north); 
\end{scope}
\end{tikzpicture}
\caption{The modified RDV representation, and the segments corresponding to vertices (drawn slightly offset for legibility).}
\label{fig:RDV_horsegment}
\end{figure}
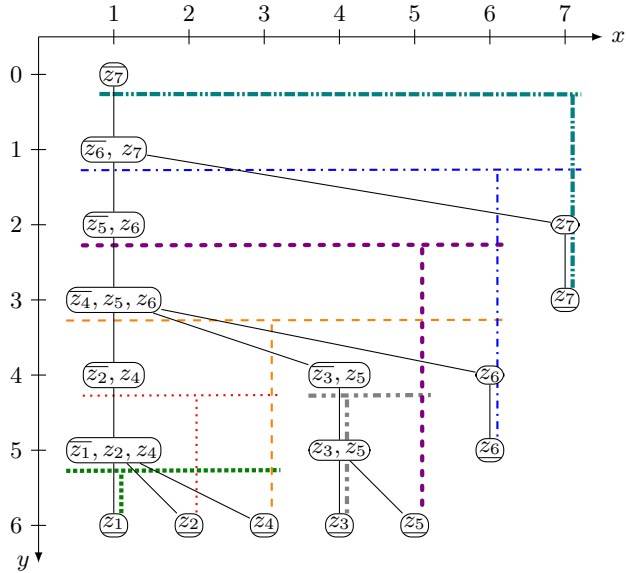

From now on, we assume that the given RDV representation satisfies the conditions of Lemma~\ref{lem:nice}.   With this, $\mathbf{v}(z)$ is disjoint from $\mathbf{v}(w)$ for any $z\neq w$, since $b(z)\neq b(w)$ are both leaves, hence obtain distinct $x$-coordinates, and this determines the $x$-coordinates of the vertical segments.   Likewise, $\mathbf{h}(z)$ is disjoint from $\mathbf{h}(w)$\changed{: If} $t(z),t(w)$ have distinct $y$-coordinates, then so do the two horizontal segments.   If $t(z)$ and $t(w)$ have the same $y$-coordinate, then they are on the same level of the host tree, and by $t(z)\neq t(w)$ they have disjoint sets of descendants.   Since the $x$-coordinates of the descendants determine the $x$-range of the horizontal segments, this makes $\mathbf{h}(z)$ and $\mathbf{h}(w)$ disjoint.   

\section{Dominating Sets in RDV Graphs}
\label{sec:DS}

To find a minimum dominating set in an RDV graph efficiently, we use
a well-known greedy-algorithm \cite{BoothJ82}.
Assume that we are given a vertex order $z_1,\dots,z_n$.
Initialize the dominating set $D$ as empty and mark all vertices as ``undominated''.
Then, for $i=1,\dots,n$, proceed as follows:
\begin{quotation}
If $z_i$ is ``undominated'', then select the vertex $z_q\in N[z_i]$ that maximizes
the index $q$ in the vertex order, add $z_q$ to $D$, and mark all neighbours of $z_q$
as ``dominated''.
\end{quotation}
Clearly this gives a dominating set, and as Booth and Johnson showed \cite{BoothJ82},
it gives a minimum dominating set if run on an RDV graph \changed{using} a bottom-up elimination
order \changed{as the vertex order}.   Also, the algorithm clearly can be implemented in linear time since we
spend $O(\deg(z))$ time whenever a vertex $z$  becomes dominated for the first time, 
or is added to $D$, and everything else takes $O(n)$ time in total.

To reduce the run-time to sublinear \changed{in the number of edges} (and specifically, $O(n\log n)$ after computing
the coordinates of host-tree $T$), we separate out the crucial operations that we
need.

\begin{itemize}
    \item [A.] \changed{For} vertex $z_i$, find the vertex $z_q \in N[z_i]$ that has the maximal index $q$. 
    \item [B.] Given \changed{a} vertex $z_q$, ensure that all vertices in $N[z_q]$ are marked ``dominated''.
\end{itemize}

Note that each operation is performed at most once per iteration, hence $n$ times in total,
so it would be sufficient to our run-time to show how to implement them in $O(\log n)$ time.
(We will achieve this bound only in an amortized sense: We use $O(n \log n)$ time over all iterations
together.)

Operation~A is very similar to an operation that we needed in our companion paper \cite{BG-CGT25}
(the only difference is that we use the closed neighbourhood, rather than the open one, and we 
maximize rather than minimize).   
As shown there, it can therefore be implemented in $O(\log n)$
time per operation using a priority search tree; we review this briefly below.

\changed{However, this implementation works efficiently only for the ``current vertex'' $v_i$, and therefore
cannot be used for Opration~B, where the input-vertex $z_q$ is arbitrary.}
(We know \changed{that} $q\geq i$ since $z_i\in N[z_i]$ \changed{and we maximize the index}, but $q>i$ is possible.)
As such, we need to develop a new and entirely different approach to support Operation~B.

\subsection{Implementing Operation $A$ in $O(\log{n})$ Time per Iteration} \label{oprA}

Define for each vertex $z$ of $G$ a \emph{tuple} consisting of a \emph{value} and a \emph{weight}.   We set the value to be $x(b(z))$, i.e., the $x$-coordinate of $\mathbf{v}(z)$.   We set the weight to be the index that $z$ has in the vertex order, i.e., it is $q$ if $z=z_q$.   To support Operation~A, we store some of these tuples in a \emph{priority search tree} $\mathcal{P}$, a data structure for 2-dimensional points that was introduced by McCreight \cite{pst}.   
This supports (among others) the operation \emph{range-maximum-query}, which receives as input a range $[x,x']$ of values, and it returns, among all tuples that are stored in $\mathcal{P}$ and whose value falls into the range, the one that maximizes the weight (breaking ties arbitrarily, and returning ``null'' if there is no such tuple).   It can do so in $O(\log |\mathcal{P}|)$ time \cite{pst}.    (Our priority search tree always stores at most $n$ tuples, so this is $O(\log n)$ time.)   A priority search tree also supports the addition or deletion of a tuple in $O(\log n)$ time.   With this, we have all tools that are needed for Operation~A.

\begin{lemma} 
\label{opAlemma}
We can implement Operation~A in amortized $O(\log n)$ time.
\end{lemma}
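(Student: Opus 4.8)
The plan is to reduce each invocation of Operation~A to a single range-maximum query in~$\mathcal{P}$, using the segment characterisation of Theorem~\ref{thm:RDVgeometric}. For the current vertex $z_i$ we seek the vertex of $N[z_i]$ of largest index; since $z_i\in N[z_i]$ this index is at least $i$. Write $\ell_i=y(t(z_i))$ for the height of $\mathbf{h}(z_i)$, and let $R_i$ be the $x$-coordinate of the rightmost descendant of $t(z_i)$, so that $\mathbf{h}(z_i)$ occupies the $x$-range $[x(t(z_i)),R_i]$ at height $\ell_i$. For a candidate $z_q$ with $q>i$, Theorem~\ref{thm:RDVgeometric} tells us $z_q\in N[z_i]$ exactly when $\mathbf{v}(z_q)$ meets $\mathbf{h}(z_i)$, that is, when $x(b(z_q))\in[x(t(z_i)),R_i]$ and $\ell_i\in[y(t(z_q)),y(b(z_q))]$. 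The $x$-condition is precisely a range over the \emph{value} stored in $\mathcal{P}$; so I would answer Operation~A by a range-maximum query over the value range $[x(t(z_i)),R_i]$ and return the vertex of maximum weight. (Lemma~\ref{lem:nice} guarantees the values $x(b(z))$ are pairwise distinct, so the query is clean.)

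The one obstacle is the $y$-condition $\ell_i\in[y(t(z_q)),y(b(z_q))]$, which a one-dimensional priority search tree cannot test directly. The key observation is that the bottom-up enumeration makes the lower bound free: since we process in order of non-increasing $y(t(\cdot))$, we have $y(t(z_q))\le y(t(z_i))=\ell_i$ for every $q>i$, so $\ell_i\ge y(t(z_q))$ holds automatically for all \emph{future} candidates. It therefore only remains to enforce the upper bound $\ell_i\le y(b(z_q))$, i.e.\ that $\mathbf{v}(z_q)$ reaches down to level $\ell_i$. I would enforce this by a timed insertion schedule maintaining the invariant that, just before the query of step~$i$, $\mathcal{P}$ holds exactly the tuples of vertices $z$ with $y(b(z))\ge \ell_i$. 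As $\ell_i$ is non-increasing in $i$, this set only grows, so the invariant is restored simply by inserting, at step $i$, every not-yet-inserted vertex whose bottom lies at or below level $\ell_i$; no deletions are ever needed. Pre-sorting the vertices by $y(b(\cdot))$ with a bucket sort on depths (in $O(|T|)$ time) lets a single advancing pointer drive these insertions.

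For correctness I would verify that the query returns $\max\{q\ge i: z_q\in N[z_i]\}$. First, $z_i$ itself is always present (as $y(b(z_i))\ge y(t(z_i))=\ell_i$) and lies in the $x$-range (as $b(z_i)$ is a descendant of $t(z_i)$), so the returned weight is at least $i$; in particular no already-processed vertex of smaller index can win, even though such vertices may still sit in $\mathcal{P}$. Second, any returned vertex of weight $q>i$ satisfies $y(b(z_q))\ge\ell_i$ (membership in $\mathcal{P}$) and $x(b(z_q))\in[x(t(z_i)),R_i]$ (the query range); together with the automatic lower bound this forces $\mathbf{v}(z_q)$ to meet $\mathbf{h}(z_i)$, hence $z_q\in N[z_i]$. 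Conversely, every genuine neighbour $z_q$ with $q>i$ meets both conditions and so is present and in range, hence none is missed. Thus the maximum weight in range equals the largest index over $N_{\ge}[z_i]$.

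Finally I would account for the time. Each step performs one range-maximum query at $O(\log n)$ worst-case cost, and across all $n$ steps the schedule performs at most $n$ insertions in total, each $O(\log n)$, plus $O(|T|)$ for preprocessing and the pointer sweep. Hence all invocations of Operation~A together cost $O(|T|+n\log n)$, i.e.\ amortized $O(\log n)$ per invocation, the amortization absorbing those steps at which many vertices are inserted at once. I expect the main obstacle to be exactly the $y$-condition of the intersection test; once it is split into a lower bound supplied for free by the bottom-up order and an upper bound supplied by the monotone insertion schedule, Operation~A collapses to the single one-dimensional range-maximum query that the priority search tree delivers.
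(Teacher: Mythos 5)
Your proof is correct and follows essentially the same route as the paper: a priority search tree keyed on $x(b(\cdot))$ with the index as weight, a monotone insertion schedule driven by comparing $y(b(\cdot))$ against the non-increasing $y(t(z_i))$, and a single range-maximum query over the $x$-range of $\mathbf{h}(z_i)$, with the lower bound $y(t(z_q))\le y(t(z_i))$ supplied for free by the bottom-up order. The only (minor) deviation is that the paper also deletes $z_i$'s tuple at the end of iteration $i$ so that $\mathcal{P}$ contains only indices $\ge i$, whereas you retain stale tuples and correctly observe that they are harmless because $z_i$ itself is always present and in range, anchoring the maximum at weight $\ge i$.
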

\begin{proof}
We maintain a priority search tree $\mathcal{P}$, initially empty.  We will ensure that throughout the following holds for $\mathcal{P}$:

\smallskip
\textbf{Invariant:\label{inv:B}} In iteration $i$, the priority search tree $\mathcal{P}$ stores exactly those tuples of vertices $z_q$ for which $q\geq i$ and $y(b(z_q)) \geq y(t(z_i)) \geq y(t(z_q))$, i.e., the $y$-range of $\mathbf{v}(z_q)$ includes $y(t(z_i))$.
\smallskip

So at the beginning of iteration $i$, we add to $\mathcal{P}$ all those tuples of vertices $z_p$ where we newly have $y(b(z_p))\geq y(t(z_i))$,
which implies $i=1$ or $y(b(z_p))<y(t(z_{i-1}))$.
(We can find these vertices efficiently by pre-sorting a list $L$ of all vertices by the $y$-coordinate of the bottom-node, and 
transferring a vertex from $L$ to $\mathcal{P}$ once this coordinate is big enough.)
A vertex $z_p$ that was newly added to $\mathcal{P}$ satisfies $p\geq i$, since either $i=1$ or
\changed{$y(t(z_p))\leq y(b(z_p))<y(t(z_{i-1}))$. H}ence $p>i{-}1$ since we use a bottom-up elimination order.
At the end of iteration $i$, we remove $z_i$'s tuple from $\mathcal{P}$, hence all tuples left in $\mathcal{P}$ have index bigger than $i$ and the invariant holds for the next iteration.
Since $y(t(z_i))$ can only decrease as $i$ increases, every vertex is added to $\mathcal{P}$ exactly once. It also is deleted exactly once, and 
the total time to maintain $\mathcal{P}$ over all iterations is $O(n\log n)$.

Recall that Operation~A means finding (in iteration $i$) the vertex $z_q\in N[z_i]$ that 
maximizes index $q$.   Since $z_i\in N[z_i]$, we have $q\geq i$, so we only need to search among vertices in $N_{\geq}[z_i]$.   We claim that doing so
is the same as to perform a range-maximum query in $\mathcal{P}$ with the $x$-range of $\mathbf{h}(z_i)$.     
Assume first that the range-maximum query returns the tuple of $z_q$.    Then the $x$-range of $\mathbf{h}(z_i)$ contains
$x(b(z_q))$, the $x$-coordinate of $\mathbf{v}(z_q)$.   Also $q\geq i$ and the $y$-coordinate of $\mathbf{h}(z_i)$ falls into the $y$-range of $\mathbf{v}(z_q)$ by the invariant.   Hence $\mathbf{v}(z_q)$ intersects $\mathbf{h}(z_i)$, and $z_q\in N_{\geq}[z_i]$.   Similarly one argues that any vertex $z_q\in N_{\geq}[z_i]$ gives an element of $\mathcal{P}$ whose value falls into the range.  Therefore range-maximum query considers exactly the (tuples of) vertices of $N_{\geq}[z_i]$ and returns the one with the largest index, as required.
So performing Operation~A means one range-minimum query which takes $O(\log n)$ time.
\end{proof}

\changed{I}n our example of Figure~\ref{fig:RDV_horsegment}, in the first iteration we have $y(t(z_1))=5$ and $\mathcal{P}$ stores the
tuples of $z_1,\dots,z_6$ since their vertical segments span across $y$-coordinate 5.   We search for $x$-range $[2,4]$, and the vertical
segments of $z_1,z_2,z_4$ fall into this range.   Of these vertices, $z_4$ has the largest weight and we add it to the dominating set.

\subsection{Implementing Operation $B$ in $O(\log{n})$ Time per Iteration} \label{oprB}

In Operation $B$ (in iteration $i$), we are given a vertex $z_q$ with $q\geq i$, and we must ensure that
its closed neighbourhood is marked dominated.   We break this up into two steps, considering $N_{\leq}[z_q]$
and $N_{>}[z_q]$ separately.
As discussed earlier, we cannot use the priority search tree
idea for this, since the segments stored in the priority search tree depend on the $y$-coordinate of $t(z_i)$, and so it can only
be used for a search within $N[z_i]$.   We instead store two different data structures (one for horizontal and one for vertical
segments) that support \emph{ray shooting queries}.

We need some background material first.    Let $S$ be a set of disjoint horizontal segments.  A \emph{ray shooting query}
receives as input a vertical ray $\vec{q}$ specified by its endpoint $q$ and the direction in which it emanates.   The desired output is
the first segment of $S$ that is \emph{hit} by the ray, i.e., it contains a point $p$ of $\vec{q}$, and the open segment $\overline{qp}$
intersects no segment of $S$.   (We return ``null'' if $\vec{q}$ hits no segments of $S$.)
Many data structures have been developed for answering ray shooting queries efficiently, and that 
permit the insertion or deletion of segments.
See for example \cite{KaplanMT03,1990MehlhornNaher} for older results with slower processing time. 
The current best run-time bounds are by Giyora and Kaplan \cite{GiyoraK09}; their data structure uses $O(|S|)$ space and supports 
updates and queries in $O(\log |S|)$ time.   (In our setting, we will always have $|S|\leq n$.)
It was shown later that these run-time bounds can be achieved even when segments and rays are not orthogonal to each other
\cite{2021Nekrich}, though we do not need this for our application.

\begin{lemma} 
\label{opBlemma}
We can implement Operation~B in amortized $O(\log n)$ time.
\end{lemma}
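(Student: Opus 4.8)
The plan is to maintain two ray-shooting data structures of the type given by Giyora and Kaplan \cite{GiyoraK09}: one structure $\mathcal{H}$ holding horizontal segments and one structure $\mathcal{V}$ holding vertical segments, where in both cases a segment of $z$ is kept exactly as long as $z$ is still ``undominated''. By Lemma~\ref{lem:nice} the segments in each structure are pairwise disjoint, as the data structure requires. I would build both structures initially from all $n$ segments and then only ever \emph{delete}: whenever a vertex becomes dominated, I remove both of its segments. This gives $O(n)$ deletions and hence $O(n\log n)$ total maintenance time, and it is independent of the priority search tree used for Operation~A, which must retain all vertices regardless of their domination status.

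To carry out Operation~B for a given $z_q$, I would split $N[z_q]$ into $N_{\leq}[z_q]$ and $N_>[z_q]$ and translate adjacency into segment intersection via Theorem~\ref{thm:RDVgeometric}. For $N_{\leq}[z_q]$, a vertex $z_p$ with $p<q$ is adjacent to $z_q$ precisely when $\mathbf{v}(z_q)$ meets $\mathbf{h}(z_p)$, so I want every horizontal segment crossing $\mathbf{v}(z_q)$; since $\mathbf{v}(z_q)$ lies on the line $x=x(b(z_q))$, I shoot a vertical ray from $b(z_q)$ toward the root and query $\mathcal{H}$. Symmetrically, for $N_>[z_q]$ a vertex $z_p$ with $p>q$ is adjacent to $z_q$ precisely when $\mathbf{v}(z_p)$ meets $\mathbf{h}(z_q)$, so I shoot a horizontal ray rightward from $t(z_q)$ along $\mathbf{h}(z_q)$ and query $\mathcal{V}$.

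The key idea that turns the single-hit ray-shooting primitive into an enumeration of a whole neighbourhood is to delete each reported segment immediately and re-shoot the same ray. Because reports arrive in order of increasing distance along the ray, the first report lying outside the extent of the query segment (a horizontal hit with $y(t(z_p))<y(t(z_q))$ in the first case, or a vertical hit whose $x$-coordinate exceeds the right endpoint of $\mathbf{h}(z_q)$ in the second) certifies that no further neighbour can be hit, so I stop; every report lying inside the extent is a genuine neighbour, which I mark dominated and delete from both $\mathcal{H}$ and $\mathcal{V}$. I still have to check that the $\mathcal{H}$-query never returns a vertex with $p>q$ and the $\mathcal{V}$-query never returns one with $p<q$; this I would argue from the bottom-up enumeration together with Lemma~\ref{lem:nice}, since forcing such a ``wrong side'' intersection would put the two top nodes on a common level, where their disjoint descendant ranges make the intersection impossible.

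For the running time, each Operation~B issues only $O(1)$ rays that terminate a scan (the out-of-extent or empty reports), costing $O(\log n)$ per iteration and $O(n\log n)$ overall; every other report is paid for by a deletion, and since each vertex is deleted at most once from each structure, the number of such reports is $O(n)$, again $O(n\log n)$ in total. The main obstacle is exactly this enumeration-by-deletion argument: I must establish both that re-shooting after each deletion sweeps out precisely the remaining in-range neighbours with a clean stopping test, and that removing a dominated vertex can never hide a neighbour that a later Operation~B would still need, which holds because an already-dominated vertex never needs to be reported again.
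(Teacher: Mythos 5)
Your proposal is correct and matches the paper's proof in all essentials: the same two dynamic orthogonal ray-shooting structures of Giyora--Kaplan over the (disjoint, by Lemma~\ref{lem:nice}) segments of still-undominated vertices, the same split into $N_{\leq}[z_q]$ and $N_{>}[z_q]$, the same delete-and-reshoot enumeration with an out-of-extent hit as stopping test, and the same amortization charging each successful hit to that vertex's unique transition to ``dominated''. The only (immaterial) difference is that you shoot the vertical ray upward from $b(z_q)$ toward the root, whereas the paper shoots it downward from the top end of $\mathbf{v}(z_q)$; both sweep exactly the horizontal segments crossing $\mathbf{v}(z_q)$ before any out-of-extent hit can occur.
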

\begin{proof}
We explain here first how to mark all undominated vertices in $N_\leq[z_q]$ and then briefly
sketch how to handle $N_>[z_q]$ symmetrically.
We maintain throughout
the algorithm a set $S_H$ that stores the horizontal segments of all those vertices that are \emph{not} dominated.   (It is
very important to remove segments of dominated vertices from this set for the amortized analysis to work out.)   Initially
$S_H$ simply stores $\mathbf{h}(z)$ for all vertices $z$.   (Recall that these segments are disjoint.)
We remove $\mathbf{h}(z)$ from $S_H$ when we
first mark $z$ as ``dominated''.   Clearly this takes $O(n\log n)$ time in total, and hence $O(\log n)$ \changed{amortized} time per iteration.

We define for every vertex $z$ a ray $\vec{z}$ that starts at the top end of $\mathbf{v}(z)$ and shoots along it
(i.e., towards $\infty$ in $y$-coordinates).    Assume that in iteration $i$ we have fixed vertex $z_q$, and
perform a ray-shooting query with ray $\vec{z_q}$ within the segment-set $S_H$.      
If the query returns $\mathbf{h}(z_p)$, then we explicitly \changed{check} whether $z_p\in N[z_q]$; since we can argue \changed{that} $p\leq q$ (see below)
this takes constant time using Theorem~\ref{thm:RDVgeometric}.   If we indeed have $z_p\in N_{\leq}(z_q)$ then 
we mark $z_p$ as dominated (it was ``undominated'' before since its segment was in $S_H$).  We remove its segment 
from $S_H$ (and also from the ``other'' data structure $S_V$
that we define below).   Then repeat the ray-query.

To see that we must have $p\leq q$, recall that we shoot the ray downwards (towards larger $y$-coordinates),
so $y(t(z_p))\geq y(t(z_q))$.   So we could have $p>q$ only if $y(t(z_p))= y(t(z_q))$, but since top nodes are distinct
this then means that the $x$-range of $\mathbf{h}(z_p)$ is disjoint from the one of $\mathbf{h}(z_q)$.   The latter range
contains the $x$-coordinate of the ray, so then $\vec{z_q}$ cannot possibly have hit $\mathbf{h}(z_p)$.  

Assume now that the query returns $\mathbf{h}(z_p)$ for some vertex $z_p$ where $p\leq q$ but $z_p\not\in N[z_q]$.
We claim that then we are done searching for neighbours.   (For the search in $N_{\leq}[z_q]$ this case actually
cannot happen, but for the search in $N_>[z_q]$ this may happen.)   Namely, the intersection point between \changed{the} ray
and $\mathbf{h}(z_p)$ then is \emph{outside} the $y$-range of $\mathbf{v}(z_q)$ (otherwise there would be an
edge by Theorem~\ref{thm:RDVgeometric}).    Further queries can only find segments where the $y$-coordinate is
even bigger, hence also outside the $y$-range of $\mathbf{v}(z_q)$, and so will not reveal any more elements of $N_{\leq}[z_q]$.

So if we perform ray-shooting queries until we either receive ``null'' or find a vertex not in the neighbourhood of $z_q$,
then 
\changed{all found vertices are in}
$N_\leq[z_q]$.    Vice versa, any undominated vertex in $N_\leq[z_q]$ will be found, since its
horizontal segment intersects $\mathbf{v}(z_q)$ and so it will be hit by the ray.   
\changed{O}n the example
of Figure~\ref{fig:RDV_horsegment}, in the first iteration (where we had $z_q=z_4$), we shoot a vertical downward
ray from point $(3,3)$, which hits the horizontal segments of $z_4,z_2,z_1$ (in this order) and then returns ``null''.

If we found $k$ vertices in $N_\leq[z_q]$ (for some $k\geq 0$), then we had to perform $k{+}1$ ray-shooting queries, which
took $O((k+1)\log n)$ time.   But we also marked $k$ vertices as dominated.  We can hence attribute one of
these log-factors to the iteration $i$ that triggered Operation~B, and all others to the vertices that are for
the first time marked as ``dominated''.   This happens once per vertex, and hence is $O(\log n)$ amortized overall.

\medskip
In a symmetric way we can ensure that all vertices in $N_>[z_q]$ are marked as dominated:   Store
a data structure $S_V$ that stores $\mathbf{v}(z)$ for every undominated vertex $z$.
(Recall that these segments are disjoint.)
To find $N_>[z_q]$, shoot a ray from the left endpoint of $\mathbf{h}(z_q)$ rightward, and find the first
segment in $S_V$ that it intersects.   If this intersects, say, $\mathbf{v}(z_p)$, then first test whether 
the $x$-coordinate of the intersection point lies within $\mathbf{h}(z_q)$.   If not, then by Theorem~\ref{thm:RDVgeometric}
(and because we return the first segment that was hit) we have found all undominated vertices in $N_>[z_q]$.   If
the $x$-coordinate is within range, then $z_p$ is an undominated neighbour of $z_q$ (and necessarily in $N_>[z_q]$
since we found all undominated vertices in $N_\leq[z_q]$ earlier already and removed their segments from $S_V$).
We mark it as dominated and repeat the ray-shooting until it returns ``null''.   
In our example, we would shoot a rightward ray from point $(1,3)$.
\changed{The first segment hit by the ray belongs to $z_5$ (remember that $z_4$ was marked dominated
earlier, and its segment removed from $S_V$ in consequence).   We mark $z_5$ as dominated, and likewise
$z_6$ (whose segment is hit next).   Next the ray hits the vertical segment of $z_7$, but here}
the $x$-coordinate of the intersection-point is  outside the range of $\mathbf{h}(z_4)$;
we stop the search and have found all of $N_>[z_4]$.
\end{proof}

\begin{algorithm}[ht]
\caption{Dominating set in RDV graphs}
\label{alg:reformulated_domset}
\Input{A graph $G$ with an RDV representation $T$}
Compute the coordinates of nodes of $T$. \\
Bucket-sort the vertices by decreasing $y(t(\cdot))$ to obtain vertex order $z_1,\dots,z_n$. \\
Bucket-sort the vertices by decreasing of $y(b(\cdot))$ to obtain a list $L$.\\
Initialize dominating set $D = \emptyset$ and empty priority search tree $\mathcal{P}$ \\
Initialize orthogonal ray shooting structures $S_H$ and $S_V$\\

\For{$i = 1,2,\cdots, n$}{
$S_H.\text{insert}(\mathbf{h}(z_i))$,
$S_V.\text{insert}(\mathbf{v}(z_i))$,
mark $z_i$ as ``undominated''.
}

\For{$i = 1,2,\cdots, n$}{
	\While(\tcp*[f]{Add to $\mathcal{P}$}){the first vertex $z_p$ in $L$ satisfies $y(b(y_p))\geq y(t(z_i))$}{
		delete $z_p$ from $L$ and add its tuple $[x(b(z_q)),q]$ to $\mathcal{P}$
	}

	\If{$z_i$ is ``undominated''}{

       	 $[\cdot,q] \leftarrow \mathcal{P}.\text{rangemax}(\text{$x$-range of $\mathbf{h}(z_i))$}$ \tcp*{Operation A}
       	 $D \leftarrow D \cup \{ z_q \}$ \;

       	 $\vec{z_q}\leftarrow$ downward ray from top end of $\mathbf{v}(z_q)$ \tcp*{Operation B, $N_{\leq}$ }
       	 $s = S_H.\text{orthrayshoot}(\vec{z_q})$\;
       	 \While{$s$ is not null (say $s=\mathbf{h}(z_p)$)}
       	 {
		\uIf{$y(t(z_p))$ lies in the $y$-range of $\mathbf{v}(z_q)$} {
	       	     Mark $z_p$ as ``dominated'',
	       	     $S_H.\text{delete}\left(\mathbf{h}(z_p)\right)$,
       		     $S_V.\text{delete}\left(\mathbf{v}(z_p)\right)$\;
       		     $s \leftarrow S_H.\text{orthrayshoot}(\vec{z_q})$
		}
		\lElse { $s\gets$ null} 
       	 }

       	 $\vec{z_q}\leftarrow$ rightward ray from left end of $\mathbf{h}(z_q)$ \tcp*{Operation B, $N_>$}
        $s = S_V.\text{orthrayshoot}(\vec{z_q})$\;
       	 \While{$s$ is not null (say $s=\mathbf{v}(z_p)$)}
        {
	   \uIf{$x(b(z_p))$ lies in the $x$-range of $\mathbf{h}(z_q)$}{
	            Mark $z_p$ as ``dominated'',
       	     $S_H.\text{delete}\left(\mathbf{h}(z_p)\right)$,
       	     $S_V.\text{delete}\left(\mathbf{v}(z_p)\right)$\;
       	     $s \leftarrow S_V.\text{orthrayshoot}(\vec{z_q})$
		}
		\lElse { $s\gets$ null}
        }

	}
    delete tuple of $z_i$ from $\mathcal{P}$
}
\KwRet{$D$}
\end{algorithm}

\bigskip
This ends the description of how to implement the operations efficiently, and gives us our main result
(we also summarize the algorithm with the pseudocode in Algorithm~\ref{alg:reformulated_domset}).

\begin{theorem} \label{domsetruntime}
\label{thm:main}
Given an $n$-vertex graph $G$ with an $O(n)$-sized RDV representation $T$, a minimum dominating set of $G$ can be found in $O(n\log n)$ time.
\end{theorem}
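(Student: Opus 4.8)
The plan is to derive Theorem~\ref{thm:main} by combining the correctness of the Booth--Johnson greedy algorithm with the two efficiency lemmas. First I would establish correctness: by Booth and Johnson \cite{BoothJ82}, the greedy algorithm that processes vertices in a bottom-up enumeration order, and that on encountering an undominated $z_i$ selects the highest-indexed vertex $z_q\in N[z_i]$, adds it to $D$, and marks $N[z_q]$ as dominated, returns a minimum dominating set of an RDV graph. I would then verify that Algorithm~\ref{alg:reformulated_domset} faithfully realizes this strategy: the range-maximum query of Operation~A returns exactly the highest-indexed vertex of $N[z_i]$ (this is Lemma~\ref{opAlemma}), and the two ray-shooting loops of Operation~B together mark exactly $N_{\leq}[z_q]\cup N_{>}[z_q]=N[z_q]$ as dominated (this is Lemma~\ref{opBlemma}). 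Since $z_q\in N[z_i]$ forces $z_i\in N[z_q]$, processing $z_i$ indeed marks it dominated, so the loop maintains the intended behaviour of the greedy rule at every iteration.

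Next I would account for the preprocessing. By Lemma~\ref{lem:nice} we may, in $O(|T|+n)$ time, replace the representation by one in which all bottom nodes are distinct leaves and all top nodes are distinct non-leaves; this is what guarantees that the horizontal segments are pairwise disjoint and the vertical segments are pairwise disjoint, as required by the ray-shooting structures and the priority search tree. Computing the node coordinates takes $O(|T|)$ time; the bottom-up enumeration $z_1,\dots,z_n$ and the list $L$ sorted by $y(b(\cdot))$ are each obtained by bucket-sort in $O(|T|+n)$ time. Initializing $\mathcal{P}$ as empty and inserting all $n$ segments into each of $S_H$ and $S_V$ costs $O(n\log n)$. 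Since we may assume $|T|\in O(n)$ \cite{BG-CGT25}, all preprocessing fits in $O(n\log n)$.

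For the main loop I would argue the amortized bound. The loop runs $n$ iterations, and the per-iteration work that is not a data-structure query is $O(1)$ plus the cost of transferring vertices from $L$ to $\mathcal{P}$. By Lemma~\ref{opAlemma} these transfers, together with the deletion of each processed vertex, total $O(n\log n)$ because each vertex enters and leaves $\mathcal{P}$ exactly once; the $n$ range-maximum queries of Operation~A add a further $O(n\log n)$. For Operation~B, Lemma~\ref{opBlemma} shows that a call discovering $k$ newly dominated vertices performs $O(k{+}1)$ ray-shooting queries; charging one query to the triggering iteration and the remaining $k$ to the $k$ vertices marked dominated for the first time (whose segments are then deleted from both $S_H$ and $S_V$), and using that each vertex is marked dominated exactly once, the total Operation~B cost is $O(n\log n)$. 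Summing all contributions gives the claimed $O(n\log n)$ running time.

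The part requiring the most care is the \emph{consistency} of these several amortization arguments rather than any single one: I must check that the charging schemes of Lemmas~\ref{opAlemma} and~\ref{opBlemma} do not double-count, i.e.\ that the ``enter/leave $\mathcal{P}$ once'' credits of Operation~A and the ``become dominated once'' credits of Operation~B are assigned to disjoint events and each is spent at most once. I would also confirm that the geometric preconditions persist throughout the run: deletions from $S_H$ and $S_V$ keep the remaining segments pairwise disjoint, so every ray-shooting query still satisfies the hypotheses of the Giyora--Kaplan structure \cite{GiyoraK09}, and the invariant of $\mathcal{P}$ is restored by the end-of-iteration deletion of $z_i$, so that Operation~A remains correct in the next iteration.
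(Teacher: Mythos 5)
Your proposal is correct and follows essentially the same route as the paper: correctness is delegated to the Booth--Johnson greedy algorithm on a bottom-up order, preprocessing is handled by Lemma~\ref{lem:nice} and the $O(|T|+n)$ coordinate/bucket-sort computations, and the running time follows from the amortized $O(\log n)$ implementations of Operations~A and~B in Lemmas~\ref{opAlemma} and~\ref{opBlemma}. Your added attention to the non-interference of the two charging schemes and to the persistence of segment disjointness under deletions is a sensible elaboration of points the paper leaves implicit, but it does not constitute a different argument.
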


The $O(\log n)$ amortized run-time is achieved with 
the ray shooting data structure of Giyora and Kaplan \cite{GiyoraK09}. This uses dynamic fractional cascading and a generalized version of van Emde Boas trees, and hence is quite complex. One could instead use an older ray-shooting data structure of Overmars \cite{Overmars85} which is considerably simpler. However, this uses run-time $O(\log^2{n})$ for updates and queries, so we would incur an extra $\log{n}$ factor in the runtime bounds.

%

\FloatBarrier

\section{$O(n)$-time Algorithm for Interval Graphs}
\label{sec:interval}
\label{sec:IS}

As mentioned in the introduction, an algorithm to find a minimum dominating set
in an interval graph
with run-time $O(n)$ was known to Cheng et al.~\cite{ChengKZ98}, but we have not
been able to find a description of it in the literature.   Since it is very easy
to state such an algorithm using the ideas of the previous section, we therefore
give an independent proof of this result here.

Every interval graph is an RDV-graph, because we can view its representation by
intersecting intervals as an RDV-representation where the host tree $T$
is a path (rooted at one end), and the interval of vertex $v$ becomes a
path $T(z)$ connecting its top end $t(z)$ to its bottom end $b(z)$.     With
operations similar to Lemma~\ref{lem:nice}, we
can achieve that $t(z)\neq t(w)$ 
for any two vertices $z\neq w$
while keeping \changed{the} host tree \changed{a path} of size $O(|T|+n)$.
Since $T$ is a path, \emph{all} nodes of $T$ have the same
$x$-coordinate, and only the $y$-coordinates are relevant for intersections.
For ease of description, we will identify nodes of $T$ with their $y$-coordinates
(i.e., distance from the root), and so for example write $[t(z),b(z)]$ for the interval 
of vertex $v$, and statements such as $b(w)\leq t(z)$.

To find a minimum dominating set in an interval graph, we use the
same greedy-algorithm, but implement the two required operations differently.

\medskip\noindent{\bf Operation A:}   We maintain a set $\mathcal{P}$ with the
same invariant, but can use a simpler data structure since all $x$-coordinates are the same:
We store $\mathcal{P}$ as an unsorted list of vertices.   Clearly we can add to this and delete from
it in $O(1)$ time, presuming every vertex keeps track of where
it is stored in \changed{$\mathcal{P}$}.

In iteration~$i$, list $\mathcal{P}$ stores exactly those vertices $z_p$ with
$t(z_p)\leq t(z_i)\leq b(z_p)$.   All these vertices intersect the interval of $z_i$
(at $t(z_i)$), and $p\geq i$, so $\mathcal{P}$ stores exactly $N_{\geq}(z_i)$.
To find the vertex $z_q\in N_{\geq}(z_i)$ that maximizes index $q$, it hence
suffices to find the vertex with maximum index in $\mathcal{P}$.   We do 
this via brute-force search in $O(|\mathcal{P}|)$ time; we will see that
this is $O(1)$ amortized time.

We add $z_q$ to the dominating set $D$, and claim that with this \emph{all}
vertices in $z_1,\dots,z_q$ are dominated.   (Here it will be crucial that we
have an interval graph.)    For if $1\leq p\leq q$, then either $p<i$
(then we made $z_p$ dominated in an earlier iteration), or 
$t(z_q)\leq t(z_p) \leq t(z_i) \leq b(z_q)$
(since $q\geq p\geq i$ and $z_q\in \mathcal{P}$)
and so the intervals of $z_q$ and $z_p$ intersect.

Therefore Operation~A (performed only when $z_i$ was undominated) will be followed
by $q-i$ iterations where Operation~A will \emph{not} be performed.   Also,
$|N_{\geq}(z_i)|\leq q-i+1$ by choice of $q$.
So these ``cheap'' iterations pay for the cost
of performing Operation~A, which hence becomes $O(1)$ time amortized.

\medskip\noindent{\bf Operation B:}   We have to ensure that all vertices
in $N_{<}(z_q)\cup N_{\geq}(z_q)$ are marked ``dominated''.   This is easy
for $N_{<}(z_q)$, a subset of $\{z_1,\dots,z_{q-1}\}$, because we 
simply mark \emph{all} of $z_i,\dots,z_{q-1}$ as dominated (correctness
was argued above).
The time for this is $O(q{-}i)$, which as above is $O(1)$ time amortized.

It remains to find all vertices in $N_{\geq}(z_q)$ that were undominated before
iteration $i$.   Observe that \emph{any} vertex $z_p$ with $p>q$ is undominated at 
this time: We have $z_p\not\in N(z_i)$ (by choice of $q$), and since 
$t(z_p)<t(z_i)$ by $p>q>i$ we therefore have $t(z_p)\leq b(z_p)<t(z_i)$.   On the other
hand, any vertex $z_d$ that is in $D$ before iteration $i$ satisfies $d<i$
(since $z_i$ is undominated), so $t(z_i)<t(z_d)\leq b(z_d)$ and $z_p\not\in N[z_d]$. 
So all we have to compute is $N_{\geq}(z_q)$.    But we saw
the method for this with Operation~A:   This is the same as the contents of
$\mathcal{P}$ at the beginning of iteration $q$.   So rather than implementing
Operation~B, we instead wait until iteration $q$ (all iterations inbetween
are trivial anyway since those vertices are dominated by $z_q$), and in that
iteration, mark the corresponding neighbours as ``dominated''.
Since they were undominated before, this is $O(1)$ amortized time.

\medskip
We summarize the algorithm in the pseudocode in Algorithm~\ref{alg:greedyInterval}.

\begin{theorem} 
Given an $n$-vertex interval graph $G$ in form of an $O(n)$-sized RDV representation $T$ where the host tree is a path, 
a minimum dominating set of $G$ can be found in $O(n)$ time.
\end{theorem}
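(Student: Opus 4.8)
The plan is to reuse the greedy framework already in place rather than invent anything new. Since every interval graph is an RDV graph, the Booth--Johnson greedy algorithm run on a bottom-up enumeration produces a \emph{minimum} dominating set, so correctness of the output is inherited for free and the entire content of the statement is the $O(n)$ running time. The first observation I would make explicit is that on a path host-tree all nodes share a single $x$-coordinate, so Theorem~\ref{thm:RDVgeometric} degenerates: the segment picture collapses to ordinary interval overlap, and two vertices $z_p,z_q$ are adjacent exactly when the intervals $[t(\cdot),b(\cdot)]$ meet. This collapse is precisely what lets me implement the two expensive operations with no geometric data structure at all, on top of the $O(|T|+n)=O(n)$ preprocessing for coordinates, the bucket-sorted bottom-up order, and a Lemma~\ref{lem:nice}-style cleanup that keeps $T$ a path while making all top nodes distinct.

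For Operation~A I would keep the same set $\mathcal{P}$ and the same invariant as in Section~\ref{sec:DS}, but exploit the collapsed geometry to store $\mathcal{P}$ as a plain unsorted list in which each vertex remembers its own position, so that insertions and deletions cost $O(1)$. The invariant then guarantees that in iteration~$i$ the list is exactly $N_{\geq}(z_i)$, so I can find its maximum-index member $z_q$ by brute-force scan in $O(|\mathcal{P}|)$ time. The structural heart of the argument is the claim that adding this $z_q$ to $D$ dominates \emph{all} of $z_1,\dots,z_q$: for $p\ge i$ one has $t(z_q)\le t(z_p)\le t(z_i)\le b(z_q)$ (using $q\ge p\ge i$ and $z_q\in\mathcal{P}$), so the intervals of $z_q$ and $z_p$ overlap, while every $z_p$ with $p<i$ was dominated in an earlier iteration. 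I would prove exactly this interval-specific inequality chain and note that it is what fails for general RDV graphs.

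Operation~B I would split as in the RDV case into $N_{<}(z_q)$ and $N_{\geq}(z_q)$. The $N_{<}$ part is immediate from the claim above: I simply bulk-mark every one of $z_i,\dots,z_{q-1}$ as dominated in $O(q-i)$ time, inspecting no adjacencies. For $N_{\geq}(z_q)$ I would use a deferral trick instead of computing the neighbourhood now: this set is precisely the contents that $\mathcal{P}$ will hold at the start of iteration~$q$, and every iteration strictly between $i$ and $q$ is trivial because those vertices are already dominated. So I postpone marking $N_{\geq}(z_q)$ to iteration~$q$, where $\mathcal{P}$ already presents exactly these vertices. I must additionally verify that each such $z_p$ with $p>q$ really was undominated before iteration~$i$, which follows from $t(z_p)\le b(z_p)<t(z_i)$ together with the fact that every $z_d\in D$ present before iteration~$i$ satisfies $t(z_i)<t(z_d)$, so no such $z_d$ can dominate $z_p$.

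The step I expect to be the main obstacle is the amortized accounting, because both the scan in Operation~A and the bulk-marking in Operation~B cost $\Theta(q-i)$ in the worst case, with no obvious per-iteration $O(1)$ bound. The plan is a two-part charging scheme. An expensive iteration~$i$ (one where $z_i$ was undominated) is immediately followed by $q-i$ iterations in which Operation~A is \emph{skipped} because those vertices are already dominated; since $|N_{\geq}(z_i)|\le q-i+1$ by the choice of $q$, I can charge the full $O(q-i+1)$ scan-and-bulk-marking cost across this block of $q-i+1$ iterations, giving $O(1)$ amortized each. The deferred marking of $N_{\geq}(z_q)$ is charged differently, namely to the vertices it dominates for the first time; because a vertex leaves the undominated state only once, this too contributes $O(1)$ amortized per vertex. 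Summing both charges over all $n$ iterations yields the claimed $O(n)$ total running time.
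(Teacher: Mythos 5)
Your proposal is correct and follows essentially the same route as the paper's own proof: the same unsorted-list implementation of $\mathcal{P}$ with the same invariant, the same inequality chain $t(z_q)\le t(z_p)\le t(z_i)\le b(z_q)$ showing $z_q$ dominates all of $z_1,\dots,z_q$, the same deferral of marking $N_{\geq}(z_q)$ to iteration $q$, and the same two-part amortization. No substantive differences to report.
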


\begin{algorithm}[ht]
\caption{Dominating set in interval graphs}
\label{alg:greedyInterval}
\Input{Graph $G$ with RDV representation where the host-tree $T$ is a path}
Compute the coordinates of nodes of $T$. \\
Bucket-sort the vertices by decreasing $y(t(\cdot))$ to obtain vertex order $z_1,\dots,z_n$. \\
Bucket-sort the vertices by decreasing of $y(b(\cdot))$ to obtain a list $L$.\\
Initialize dominating set $D = \emptyset$ and an empty list $\mathcal{P}$\\ 
Mark all vertices as ``undominated'' \\

\For{$i = 1,2,\cdots, n$}{
	\While{the first vertex $z_q$ in $L$ satisfies $y(b(z_q))\geq y(t(z_i))$}{
		delete $z_q$ from $L$ and add it to $\mathcal{P}$
	}

	\If{$z_i$ is ``undominated''}{

	find vertex $z_q$ with maximum index in $\mathcal{P}$ \tcp*{Operation A}
       	 $D \leftarrow D \cup \{ z_q \}$ \;
       	 \lFor(\tcp*[f]{Operation B, $N_{<}$}){$j=i$ to $q{-}1$}{mark $z_j$ as ``dominated''} 
	}
	\If(\tcp*[f]{$z_i$ was $z_q$ during last Operation A}){$z_i$ belongs $D$}{
		mark all vertices in $\mathcal{P}$ as ``dominated'' \tcp*{Operation B, $N_{\geq}$ }
	}

    delete $z_i$ from $\mathcal{P}$
}
\KwRet{$D$}
\end{algorithm}

\section{Further Thoughts}
\label{sec:further}

In this paper, we gave a sublinear time algorithm to find the minimum dominating set in an RDV graph.   The main ingredient is to implement the needed operations (in a natural greedy-algorithm) efficiently via geometric data structures involving priority search trees and ray shooting.

Our techniques 
can be applied 
for sublinear time algorithms for other problems.   As an example, consider the independent set problem, i.e., we want to find a maximum set $I$ of vertices such that no two of them are adjacent.   There is a natural greedy algorithm that finds the correct answer in a chordal graph~\cite{Gol80}: Fix a so-called perfect elimination order $z_1,\dots,z_n$ (for an RDV graph, a bottom-up elimination order is a perfect elimination order).   Initialize $I$ as an empty set and mark all vertices as ``not covered''.   For $i=1,\dots,n$, if $z_i$ has not yet been covered, then add $z_i$ to $I$ and mark all vertices in $N[z_i]$ as ``covered''.   
With Operation~B, we can perform in an RDV graph the operation of marking all neighbours as covered in $O(k\log n)$ time, where $k$ is the number of vertices that were not previously covered.    Since this happens to every vertex only once, the run-time becomes $O(n\log n)$ in an RDV graph.
 
We end with an open question. Can the runtime of Theorem~\ref{domsetruntime} be improved to \( O(n \log\log n) \) or even \( O(n) \)? One possible approach is to improve the runtime of priority search trees and orthogonal ray shooting data structures, at least in the special situation when all coordinates are integers in $O(n)$. 
But it is also conceivable that one could generalize the algorithm for interval graphs, i.e., exploit the structure of RDV graphs further to argue that simpler data structures suffice for the operations.


\bibliographystyle{plainurl}
\bibliography{refs} 

\begin{appendix}
\section{Missing Proof}
\NiceRDV*
\begin{proof}
We assume that each node $\nu$ of $T$ knows all vertices for which it is the top node, and all vertices for which it is the bottom node.   (We can compute this in $O(|T|+n)$ time.)   Now perform the following at each node $\nu$ of $T$.   First, if there are $d$ vertices $z_1,\dots,z_d$ for which $\nu$ is the bottom node, then add $d$ new leaves as children of $\nu$, and make these the new bottom nodes for $z_1,\dots,z_d$.   (With this, no top node is a leaf anymore, since the corresponding bottom node became a strict descendant.)    Second, if there are $k\geq 2$ vertices $w_1,\dots,w_k$ for which $\nu$ is the top node, then let $\rho$ be the parent of $\nu$ (temporarily insert a new node as parent $\rho$ if $\nu$ was the root).   Subdivide the link from $\nu$ to $\rho$ with $k-1$ new nodes, and make these the new top nodes of $w_2,\dots,w_k$.   One easily verifies that the newly defined paths intersect in a node if and only if the old paths did.   We added at most two new nodes for each vertex of $G$, and so $|T'|\leq |T|+O(n)$.
\end{proof}
\end{appendix}

\end{document}